\definecolor{pku-red}{RGB}{139,0,18}
\theoremstyle{plain}
\newtheorem{theorem}{Theorem}[section]
\newtheorem{proposition}[theorem]{Proposition}
\newtheorem{lemma}[theorem]{Lemma}
\theoremstyle{definition}
\theoremstyle{remark}
\newtheorem{remark}[theorem]{Remark}
\newcommand{\EE}{\mathbb{E}}
\newcommand{\RR}{\mathbb{R}}
\newcommand{\Aa}{\mathcal{A}}
\newcommand{\Bb}{\mathcal{B}}
\newcommand{\Ss}{\mathcal{S}}
\newcommand{\bmb}{{\bm{b}}}
\newcommand{\bmu}{{\bm{u}}}
\newcommand{\bmv}{{\bm{v}}}
\newcommand{\bmw}{{\bm{w}}}
\newcommand{\bmx}{{\bm{x}}}
\DeclareMathOperator*{\argmax}{argmax}
\newcommand{\rev}{{R}}
\newcommand{\name}{\text{ZFO-VVCA}}
\newcommand{\MAW}{\mathrm{MAW}}
\newcommand{\AB}{\mathrm{AB}}
\newcommand{\bw}{{\bm{w}}}
\newcommand{\bl}{{\bm\lambda}}
\newcommand{\ba}{{\bm\alpha}}
\newcommand{\bnoise}{{\bmu}}
\newcommand{\noise}{{u}}
\newcommand{\email}[1]{\texttt{#1}}
\title{\bfseries{Automated Deterministic Auction Design with Objective Decomposition}}
\author{
\textbf{Zhijian Duan}$^{1,*}$, 
\textbf{Haoran Sun}$^{1,*}$,  
\textbf{Yichong Xia}$^{1,\#}$, 
\textbf{Siqiang Wang}$^{1,\#}$,\\ 
\textbf{Zhilin Zhang}$^2$, 
\textbf{Chuan Yu}$^2$, 
\textbf{Jian Xu}$^2$, 
\textbf{Bo Zheng}$^2$, 
\textbf{Xiaotie Deng}$^1$
\\
$^{1}$Peking University
$^{2}$Alibaba Group
\\
\email{zjduan@pku.edu.cn},\email{\{sunhaoran0301,xiayc,smswangsiqiang\}@stu.pku.edu.cn},\\
\email{\{zhangzhilin.pt,yuchuan.yc,xiyu.xj,bozheng\}@alibaba-inc.com},\\
\email{xiaotie@pku.edu.cn}
}
\date{}
\begin{document}
	\maketitle
	\def\thefootnote{*}\footnotetext{Equal contribution.}
	\def\thefootnote{\#}\footnotetext{Equal contribution.}

	\begin{abstract}
Identifying high-revenue mechanisms that are both dominant strategy incentive compatible (DSIC) and individually rational (IR) is a fundamental challenge in auction design.
While theoretical approaches have encountered bottlenecks in multi-item auctions, there has been much empirical progress in automated designing such mechanisms using machine learning. 
However, existing research primarily focuses on randomized auctions, with less attention given to the more practical deterministic auctions.
Therefore, this paper investigates the automated design of deterministic auctions and introduces OD-VVCA, an objective decomposition approach for automated designing Virtual Valuations Combinatorial Auctions (VVCAs).
Firstly, we restrict our mechanism to deterministic VVCAs, which are inherently DSIC and IR.
Afterward, we utilize a parallelizable dynamic programming algorithm to compute the allocation and revenue outcomes of a VVCA efficiently.
We then decompose the revenue objective function into continuous and piecewise constant discontinuous components, optimizing each using distinct methods.
Extensive experiments show that OD-VVCA achieves high revenue in multi-item auctions, especially in large-scale settings where it outperforms both randomized and deterministic baselines, indicating its efficacy and scalability.
\end{abstract}
	
	\section{Introduction}
Finding high-revenue mechanisms that are both \emph{dominant strategy incentive compatible} (DSIC) and \emph{individually rational} (IR) is a fundamental problem in auction design, where DSIC and IR mechanisms incentivize bidders to truthfully report their valuations and ensure they receive non-negative utilities by doing so.
While \citet{myerson1981optimal} characterized the optimal design for single-item auctions, theoretical approaches have faced significant challenges in general multi-item cases~\citep{dutting2023optimal}.
Consequently, there has been much progress in automated auction design, which formulates a constrained optimization problem and empirically seeks high-revenue mechanisms using machine learning~\citep{VVCA2015, dutting2019optimal}.

Existing research on the automated design of general multi-item auctions primarily focuses on randomized auctions, which produce randomized allocation outcomes~\citep{dutting2019optimal,curry2020certifying,peri2021preferencenet,rahme2021auction,rahme2021permutation,duan2022context,ivanov2022optimal,curry2022differentiable,duan2023scalable}, with deterministic auctions receiving less attention.
However, deterministic auctions offer practical advantages over randomized ones in real-world applications, because deterministic allocations are more interpretable and easier to implement, leading to widespread adoption in online advertising~\citep{edelman2007internet, liu2021neural, li2024deep}. 
Motivated by these practical considerations, this paper investigates the automated design of deterministic mechanisms for general multi-item auctions.

The automated design of deterministic mechanisms for general multi-item auctions faces two main challenges. 
Firstly, \emph{scalability} is a significant concern. 
This is because presently, the most effective approach to ensuring DSIC and IR in automated deterministic auction design is to constrain the mechanism to Affine Maximizer Auctions (AMAs)~\citep{AMA}. 
AMAs generalize Vickrey-Clarke-Groves (VCG) auctions~\citep{VCG} by assigning weights to each bidder and boosting each candidate allocation to achieve higher revenue than VCG while maintaining DSIC and IR. 
However, AMAs boost each candidate allocation, resulting in $O((n+1)^m)$ boost parameters for $n$ bidders and $m$ items, which is exponentially large. 
Even worse, computing the allocation result of an AMA, which requires finding the winning allocation that maximizes the affine welfare, is NP-Hard~\citep{rothkopf1998computationally}. 
Secondly, the revenue function of AMAs is \emph{non-differentiable} with respect to the AMA parameters.
This issue, which we will discuss in detail in \cref{sec:opt}, arises because the computation of revenue involves searching for the winning allocation among all $(n+1)^m$ deterministic allocations.

To address the scalability and non-differentiability issues of AMA-based automated deterministic auction design, we introduce OD-VVCA, an objective decomposition approach for designing Virtual Valuations Combinatorial Auctions (VVCAs)~\citep{VVCA2005, VVCA2015}. 
Firstly, we constrain the mechanism to be deterministic VVCAs, a subset of deterministic AMAs that boost each bidder-bundle pair, reducing the parameter space from $O((n+1)^m)$ to $O(n\times 2^m)$. 
Secondly, since determining the winning allocations of VVCAs remains NP-Hard~\citep{VVCA2005}, we develop a parallelizable dynamic programming algorithm to solve the problem relatively efficiently. 
Thirdly, we decompose the objective function, i.e., the revenue of the VVCA, into continuous and discontinuous components.
The continuous component is differentiable almost everywhere, allowing us to take its derivative directly.
The discontinuous component is a piecewise constant function, and we use Monte Carlo estimation to approximate the gradient of its Gaussian smoothing.
Finally, by integrating the gradients from both components, we optimize the VVCA parameters using gradient ascent.

Extensive experiments show that OD-VVCA can achieve higher revenue than deterministic baselines and even randomized AMA-based approaches, particularly in large-scale asymmetric auctions. 
Additionally, we conduct an ablation study and case studies to showcase the significant benefits that the objective decomposition brings to \name\ in terms of revenue and optimization. 
All these experimental results indicate the effectiveness of our approach.

\section{Related Work}

Research in automated auction design can be broadly categorized into two threads. 
The first thread, initiated by RegretNet~\citep{dutting2019optimal}, utilizes ex-post regret as a metric to quantify the extent of DSIC violation. 
Building upon this concept, subsequent works such as~\citet{curry2020certifying, peri2021preferencenet, rahme2021auction, rahme2021permutation, curry2022learning, duan2022context, ivanov2022optimal} represent auction mechanisms through neural networks. 
These models aim to achieve near-optimal and approximate DSIC solutions through adversarial training.
The advantages of these regret-based methods lie in their ability to attain high revenue, considering a broad class of auction mechanisms. 
Furthermore, this methodology is generalizable and applicable to various other mechanism design problems, including multi-facility location~\citep{golowich2018deep}, two-sided matching~\citep{ravindranath2021deep}, and data markets~\citep{ravindranath2023data}. 
However, the regret-based methods are not guaranteed to be DSIC, and computing the regret term can be time-consuming.
Moreover, all the regret-based methods are designed for randomized auctions.

The second thread is based on Affine Maximizer Auctions (AMAs)~\citep{AMA}, a weighted variation of the Vickrey-Clarke-Groves (VCG) mechanism.
AMAs assign weights to each bidder and boost each candidate allocation, enabling them to achieve higher revenue than VCG while maintaining DSIC. 
This method is particularly well-suited for scenarios with limited candidate allocations~\citep{li2024deep}.
Furthermore, the sample complexity of AMA has been characterized by \citet{balcan2016sample, balcan2018general, balcan2023generalization}. 
However, conventional AMAs encounter scalability challenges in multi-item auctions with $n$ bidders and $m$ items, given the total of $(n+1)^m$ deterministic candidate allocations. 
Additionally, an AMA's revenue is non-differentiable regarding its parameters, as the computation involves searching for the affine welfare-maximizing allocation.
One approach to address these challenges is to restrict the size of candidate allocations to $s \ll (n+1)^m$ and make the candidate allocations learnable~\citep{curry2022differentiable, duan2023scalable}, leading to randomized candidate allocations.
However, the effective size of candidate allocations still grows with an increasing number of bidders and items. 
As we will discuss in \cref{sec:exp}, this growth results in suboptimal performance in large-scale auctions.

An additional strategy to address the scalability issue of AMA involves further constraining the mechanism to specific subsets, such as Virtual Valuations Combinatorial Auctions (VVCAs)~\citep{likhodedov2004methods, VVCA2005, VVCA2015}, $\lambda$-auctions~\citep{jehiel2007mixed}, mixed bundling auctions~\citep{tang2012mixed}, and bundling boosted auctions~\citep{balcan2021learning}. 
Among these options, VVCAs constitute the most significant subset.
Given this, we focus on restricting the mechanism to be VVCAs. 
However, VVCAs still suffer from the non-differentiability of revenue. 
Existing VVCA-based algorithms~\citep{VVCA2005,VVCA2015} compute the ``gradient'' of revenue based on fixed winning allocations, neglecting the influence of VVCA parameters on the winning allocations.

\section{Preliminary}

\paragraph{Sealed-Bid Auction}
A sealed-bid auction involves $n$ bidders denoted as $[n] = \{1, 2, \dots, n\}$ and $m$ items denoted as $[m] = \{1, 2, \dots, m\}$. 
Each bidder $i$ assigns a valuation $v_i(\Ss)$ to every bundle of items $\Ss \subseteq [m]$ and submits bids for each bundle as $\bmb_i \in \RR^{2^m}$. 
We assume $v_i(\emptyset) = 0$ for all bidders. 
In the additive setting, $v_i(\Ss) = \sum_{j\in\Ss} v_i(\{j\}) = \sum_{j\in\Ss} v_{ij}$.
The valuation profile $V = (\bmv_1, \bmv_2, \dots, \bmv_n)$ is generated from a distribution $D$. The auctioneer lacks knowledge of the true valuation profile $V$ but can observe the bidding profile $B = (\bmb_1, \bmb_2, \dots, \bmb_n)$.

\paragraph{Auction Mechanism}
An auction mechanism $(g, p)$ consists of an allocation rule $g$ and a payment rule $p$. 
Given the bids $B$, $g_i(B)$ computes the allocation result for bidder $i$, which can be a bundle of items or a probability distribution over all the bundles. 
The payment $p_i(B) \geq 0$ computes the price that bidder $i$ needs to pay. 
Each bidder aims to maximize her utility, defined as $u_i(\bmv_i, B) \coloneqq v_i(g_i(B)) - p_i(B)$.
Bidders may misreport their valuations to gain an advantage. 
Such strategic behavior among bidders could make the auction result hard to predict. 
Therefore, we require the auction mechanism to be \emph{dominant strategy incentive compatible (DSIC)}, meaning that for each bidder $i \in [n]$, reporting her true valuation is her optimal strategy regardless of how others report. 
Formally, a DSIC mechanism satisfies that, for any bidder $i \in [n]$ and any bids of others $B_{-i} = (\bmb_1, \dots, \bmb_{i-1}, \bmb_{i+1}, \dots, \bmb_n)$, we have $u_i(\bmv_i,(\bmv_i,B_{-i})) \ge u_i(\bmv_i,(\bmb_i,B_{-i}))$ for arbitrary misreport $\bmb_i$.
Furthermore, the auction mechanism needs to be \emph{individually rational (IR)}, ensuring that truthful bidding results in a non-negative utility for each bidder, that is, $u_i(\bmv_i,(\bmv_i,B_{-i})) \ge 0$. 

\paragraph{Affine Maximizer Auction (AMA)}
AMAs~\citep{AMA} is a generalized version of VCG auctions~\citep{VCG} and inherently ensures DSIC and IR. 
A (deterministic) AMA contains positive weights $w_i \in \RR_+$ for each bidder and boosts $\lambda(A) \in \RR$ for each candidate allocation $A \in \Aa$, where $\Aa$ is the set of all the $(n+1)^m$ deterministic allocations. 
Given bids $B$, an AMA of parameters $(\bw, \bl)$ selects the allocation $A^*$ that maximizes the affine welfare (with an arbitrary tie-breaking rule):
\begin{equation}
\label{eq:allocation}
\begin{aligned}
    A^* = g(B; \bmw, \bm\lambda) \coloneqq \argmax_{A \in \Aa} \sum_{i=1}^n w_iv_i(A) + \lambda(A),
\end{aligned}
\end{equation}
and each bidder $i$ pays for her normalized negative affine welfare impact on other bidders:
\begin{equation}
\label{eq:payment}
\begin{aligned}
    p_i(B; \bmw, \bm\lambda) = \frac{1}{w_i}\max_{A \in \Aa}\left(\sum_{j \neq i} w_j b_j(A) + \lambda(A)\right) - \frac{1}{w_i}\left(\sum_{j \neq i} w_j b_j(A^*) + \lambda(A^*)\right).
\end{aligned}
\end{equation}

\paragraph{Virtual Valuations Combinatorial Auction (VVCA)}
VVCAs~\citep{VVCA2005} is a subset of AMAs, distinguishing itself by decomposing the boost variable of AMAs into $n$ parts, one for each bidder. 
Formally, a VVCA boosts per bidder-bundle pair as 
\begin{equation}
\label{eq:VVCA}
\tag{VVCA}
    \lambda(A) = \sum_{i=1}^n \lambda_i(A_i).
\end{equation}
This decomposition results in a reduction of VVCA parameters from $O((n+1)^m)$ to $O(n\times 2^m)$, aligning with the same order as the input valuation $V$.

\section{Methodology}

In this section, we outline the methodology of \name, as depicted in \cref{fig:OD-VVCA}. 
We begin by formalizing our problem of automated mechanism design for deterministic Virtual Valuations Combinatorial Auctions (VVCAs). 
Next, we introduce the dynamic programming algorithm for computing the auction results of VVCAs and explain how we optimize the VVCA parameters using objective decomposition. 

\subsection{Problem Formulation}

We aim to discover a high-revenue deterministic auction mechanism that satisfies both DSIC and IR for a sealed-bid auction with $n$ bidders and $m$ items. 
Since there is currently no known characterization of a DSIC multi-item combinatorial auction~\citep{dutting2023optimal}, a common strategy is to limit the auction class to AMAs~\citep{VVCA2005,VVCA2015,curry2022differentiable,duan2023scalable}. 
AMAs are inherently DSIC and IR, encompassing a broad class of mechanisms~\citep{lavi2003towards}. 
However, the boost parameters $\bm\lambda$ are defined across the entire candidate allocations, resulting in $O((n+1)^m)$ parameters. 
To address the scalability challenges stemming from the parameter space of AMAs, we further constrain the auction class to be (deterministic) Virtual Valuations Combinatorial Auctions (VVCAs). 
Based on this constraint, we can formalize the auction design problem as the following optimization problem:
\begin{equation}
\label{eq:opt}
\begin{aligned}
    \max_{\bmw \in \RR_+^n, \bm\lambda \in \RR^{n\times 2^m}}~\rev_D(\bmw, \bm\lambda) \coloneqq& \EE_{V \sim D} \left[ \rev(V, \bmw, \bm\lambda) \coloneqq \sum_{i=1}^n p_i(V; \bmw, \bm\lambda) \right],
\end{aligned}
\end{equation}
where we define $R(V, \bmw, \bm\lambda)$ as the revenue of valuation profile $V$ under VVCA parameters $\bmw$ and $\bm\lambda$, which can be further derived as:
\begin{equation}\label{eq:revenue}
    \begin{aligned}
        &\rev(V, \bmw, \bm\lambda) = \overbrace{\sum_{i=1}^n v_i(g(V;\bmw, \bm\lambda))}^{Z(V, \bmw, \bm\lambda)} + \\
        &~~\underbrace{\sum_{i=1}^n \frac{1}{w_i} \max_{A \in \Aa}\left(\sum_{j \neq i} w_j v_j(A) + \sum_{k=1}^n\lambda_k(A_k)\right) - \sum_{i=1}^n\frac{1}{w_i} \max_{A \in \Aa} \left(\sum_{j = 1}^n w_j v_j(A) + \lambda_j(A_j)\right)}_{F(V, \bmw, \bm\lambda)},
    \end{aligned}
\end{equation}
where we denote $Z(V, \bmw, \bm\lambda)$ and $F(V, \bmw, \bm\lambda)$ as the two components of $R(V, \bmw, \bm\lambda)$.

\subsection{Winning Allocation Determination}\label{sec:method:DP}

As shown in \cref{eq:revenue}, computing $\rev(V, \bmw, \bm\lambda)$ requires finding the winning allocation $g(V; \bmw, \lambda)$ and the allocation that maximizes the affine welfare for each bidder $i \in [n]$, which can be viewed as the allocation maximizing the valuation profile $(\bm{0}, V_{-i})$. 
Therefore, we need to develop an algorithm to compute the winning allocation for an arbitrary given valuation profile.

The winning allocation determination of a VVCA is NP-Hard~\citep{VVCA2005}. 
To address this challenge as efficiently as possible, we propose a parallelizable dynamic programming algorithm for computing the winning allocation of VVCAs.
Firstly, we denote $\MAW(i, \Ss)$ as the Maximum Affine Welfare if we allocate all items in bundle $\Ss \in 2^m$ to the first $i$ bidders, and we use $\AB(i, \Ss)$ to record the Allocated Bundle of bidder $i$. 
We initialize $\MAW(1, \Ss) = w_1v_1(\Ss) + \lambda_1(\Ss)$ and $\AB(1, \Ss) = \Ss$, and compute $\MAW(i, \Ss)$ and $\AB(i, \Ss)$ for $i = 2, 3, \dots, n$ and $\Ss \in 2^m$ iteratively. 
The computation of $\MAW(i, \Ss)$ and $\AB(i, \Ss)$ involves enumerating all subsets of $\Ss$ as the allocated bundle to bidder $i$, which are given by:
\begin{equation}
\label{eq:DP}
\begin{aligned}
    \MAW(i, \Ss) &= \max_{\Bb \subseteq \Ss} \MAW(i - 1, \Ss\backslash \Bb) + w_iv_i(\Bb) + \lambda_i(\Bb), \\
    \AB(i, \Ss) &= \argmax_{\Bb \subseteq \Ss} \MAW(i - 1, \Ss\backslash \Bb) + w_iv_i(\Bb) + \lambda_i(\Bb),
\end{aligned}
\end{equation}
which can be computed in parallel for all subsets $\Bb$ using vector operations (see \cref{app:DP} for details). 
Finally, to obtain the winning allocation, we define $\Ss_n \coloneqq \argmax_{\Ss \subseteq [m]} \MAW(n, \Ss)$ as the set of all allocated items in the affine welfare-maximizing allocation, and iteratively determine the allocated bundle $\AB(i, \Ss_i)$ for bidder $i = n, n - 1, \dots, 1$, updating $\Ss_{i - 1} \coloneqq \Ss_i - \AB(i, \Ss_i)$.

The pseudocode of the entire DP algorithm is provided in \cref{app:DP}, where we also prove that the time complexity and space complexity of the DP algorithm are $O(n3^m)$ and $O(n2^m)$, respectively. 
Notably, the input valuation profile is already $O(n2^m)$, demonstrating the efficiency of our dynamic programming algorithm. 
Furthermore, as we discussed, the computation of \cref{eq:DP} can be performed using vector operations.
Therefore, when solving the DP for a batch of valuations, we can further accelerate the computation through matrix operations, which can be parallelized using GPUs.

\begin{figure}[t]
    \centering
    \includegraphics[width = \linewidth, trim=0 0 0 0,clip]{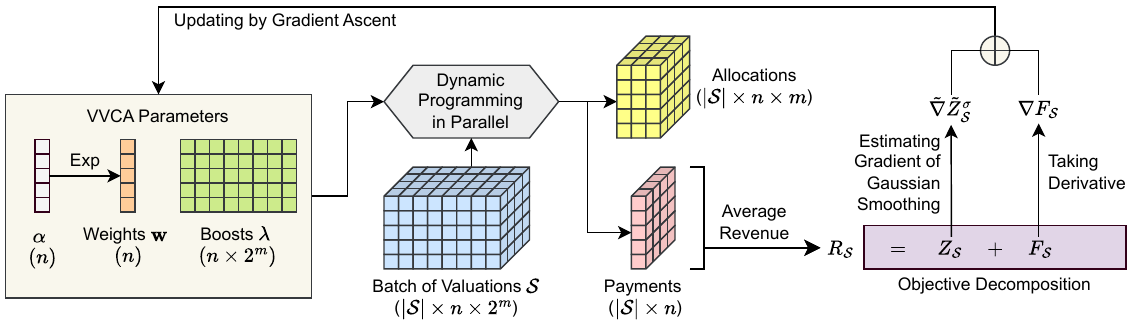}
    \caption{
        Overview of OD-VVCA for automated design of deterministic auctions. 
        We constrain the mechanism to VVCAs with $n + n \times 2^m$ parameters, introducing $\bm\alpha \in \mathbb{R}^n$ such that $\bmw = e^\ba \in \mathbb{R}_+^n$ to accommodate the positive range of $\bmw$.
        A parallelizable dynamic programming algorithm is employed to compute the allocations and payments for a batch of valuations.
        Afterward, the average revenue objective is decomposed into two components, and their gradients are estimated or computed directly. 
        By integrating these gradients, we update the VVCA parameters through gradient ascent.
    }    
    \label{fig:OD-VVCA}
\end{figure}

Based on the DP algorithm, we can efficiently compute $\rev(V, \bmw, \bm\lambda)$ as described in \cref{eq:revenue}. 
The computation takes $n+1$ DPs: one to solve the winning allocation for valuation profile $V$, and $n$ additional DPs to solve for the valuation profile $(\bm{0}, V_{-i})$ for each bidder $i \in [n]$. 

\subsection{Optimization}\label{sec:opt}

To optimize the VVCA parameters $\bmw$ and $\bm\lambda$ in \cref{eq:opt}, we first introduce $\bm\alpha \in \RR^n$ as the logarithm of $\bmw$ (such that $\bmw = e^\ba \in \RR_+^n$) and optimize $\bm\alpha$ instead to handle the positive range of $\bmw$. 
Then we optimize the empirical revenue, which is defined as:
\begin{equation}\label{eq:revenue:empirical}
\begin{aligned}
    \rev_\Ss(\bmw = e^\ba, \bm\lambda) \coloneqq \frac{1}{|\Ss|}\sum_{V \in \Ss} \rev(V,e^\ba, \bl) = \underbrace{\frac{1}{|\Ss|}\sum_{V \in \Ss}Z(V, e^\ba, \bm\lambda)}_{Z_\Ss(e^\ba, \bl)} + \underbrace{\frac{1}{|\Ss|}\sum_{V \in \Ss}F(V, e^\ba, \bm\lambda)}_{F_\Ss(e^\ba, \bl)},
\end{aligned}
\end{equation}
where $\Ss$ is a dataset of valuations sampled i.i.d. from $D$.
However, $\rev_\Ss(e^\ba, \bl)$ is non-differentiable with respect to $\ba$ and $\bl$, since the computation of $\rev(V,e^\ba, \bl)$ involves finding the winning allocation from all deterministic candidate allocations and computing the social welfare of the winning allocation (i.e., $Z(V, e^\ba, \bl)$ in \cref{eq:revenue}).
Therefore, we decompose it into two parts: $F_\Ss(e^\ba, \bl)$ and $Z_\Ss(e^\ba, \bl)$, and optimize them separately.

The optimization of $F_\Ss(e^\ba, \bm\lambda)$ is straightforward since it is a continuous and piece-wise non-constant linear function with respect to $e^\ba$ and $\bm\lambda$. 
Thus, we can easily optimize $F_\Ss(e^\ba, \bm\lambda)$ through its gradient $\nabla F_\Ss(e^\ba, \bm\lambda)$, where the gradient of the maximum function can be estimated similarly to the ReLU function.

However, optimizing $Z_\Ss(e^\ba, \bm\lambda)$ is more challenging. 
This is because $Z_\Ss(e^\ba, \bm\lambda)$ is a discontinuous, piece-wise constant function. 
Its derivatives at differentiable points are always $0$, and it undergoes sudden changes at the non-differentiable discontinuous points, making it not feasible to optimize $Z_\Ss(e^\ba, \bm\lambda)$ using its own gradient.
To address this issue, similar to \citet{bichler2021learning}, we optimize $Z_\Ss(e^\ba, \bm\lambda)$ using Gaussian smoothing techniques.
Specifically, we optimize the Gaussian smoothing approximation of $Z_\Ss(e^\ba, \bm\lambda)$, which is defined as:
\begin{equation*}
    \Tilde{Z}_\Ss^\sigma(e^{\bm\alpha}, \bm\lambda) \coloneqq \EE_{\bm\epsilon \sim N(0, 1)^n, \bm\delta \sim N(0, 1)^{n\times 2^m}} \left[Z_\Ss(e^{\bm\alpha + \sigma\bm\epsilon}, \bm\lambda + \sigma\bm\delta)  \right],
\end{equation*}
where $\sigma > 0$ and $N(0, 1)$ is the standard normal distribution. 
Clearly, as $\sigma \to 0$, $\Tilde{Z}_\Ss^\sigma(e^{\bm\alpha}, \bm\lambda)$ approaches $Z_\Ss(e^\ba, \bm\lambda)$.
As for the differentiability property of $\Tilde{Z}_\Ss^\sigma(e^{\bm\alpha}, \bm\lambda)$, we have the following proposition:
\begin{restatable}{proposition}{propositionZSmooth}
    \label{proposition:Z:smooth}
    $\Tilde{Z}_\Ss^\sigma(e^{\bm\alpha}, \bm\lambda)$ is continuous and differentiable with derivatives:
    \begin{equation*}
    \begin{aligned}
        \nabla_{\ba} \Tilde{Z}_\Ss^\sigma(e^{\bm\alpha}, \bm\lambda) &= \EE_{\bm\epsilon \sim N(0, 1)^n, \bm\delta \sim N(0, 1)^{n\times 2^m}} \left[\frac{Z_\Ss(e^{\bm\alpha + \sigma\bm\epsilon}, \bm\lambda + \sigma\bm\delta) - Z_\Ss(e^{\bm\alpha}, \bm\lambda)}{\sigma}\bm\epsilon  \right]\\
        \nabla_{\bl} \Tilde{Z}_\Ss^\sigma(e^{\bm\alpha}, \bm\lambda) &= \EE_{\bm\epsilon \sim N(0, 1)^n, \bm\delta \sim N(0, 1)^{n\times 2^m}} \left[\frac{Z_\Ss(e^{\bm\alpha + \sigma\bm\epsilon}, \bm\lambda + \sigma\bm\delta) - Z_\Ss(e^{\bm\alpha}, \bm\lambda)}{\sigma}\bm\delta  \right].
    \end{aligned}
    \end{equation*}
\end{restatable}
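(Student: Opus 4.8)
The plan is to recognize $\tilde Z^\sigma_\Ss$ as a Gaussian convolution of a bounded measurable function and then differentiate under the integral sign. Write $\theta = (\ba,\bl)\in\RR^{n+n2^m}$, let $\xi = (\be,\bd)$ be a standard Gaussian vector of matching dimension, and set $h(\theta) \coloneqq Z_\Ss(e^\ba,\bl)$, so that $\tilde Z^\sigma_\Ss(e^\ba,\bl) = \EE_\xi[h(\theta+\sigma\xi)]$. The first point I would nail down is that $h$ is bounded and Borel measurable: for each fixed $V\in\Ss$, the quantity $Z(V,\bmw,\bl) = \sum_i v_i(g(V;\bmw,\bl))$ always equals $\sum_i v_i(A)$ for one of the finitely many allocations $A\in\Aa$, so $|h|$ is bounded by $M\coloneqq\max_{V\in\Ss}\max_{A\in\Aa}\sum_i v_i(A)<\infty$; and $h$ is piecewise constant on the finitely many regions of parameter space on which a fixed allocation maximizes the affine welfare for every $V\in\Ss$ (pre-composition with the smooth map $\ba\mapsto e^\ba$ does not affect either property). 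Letting $\phi$ denote the standard Gaussian density and $\rho_\sigma$ the $N(0,\sigma^2 I)$ density, the substitution $y = \theta+\sigma\xi$ rewrites the smoothed objective as the convolution
\[ \tilde Z^\sigma_\Ss(e^\ba,\bl) \;=\; \int h(\theta+\sigma\xi)\,\phi(\xi)\,d\xi \;=\; \int h(y)\,\rho_\sigma(y-\theta)\,dy. \]

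Next I would differentiate the convolution in $\theta$. Because $\nabla_\theta\rho_\sigma(y-\theta) = \frac{y-\theta}{\sigma^2}\rho_\sigma(y-\theta)$, differentiating under the integral and then substituting $y = \theta+\sigma\xi$ back yields
\[ \nabla_\theta\tilde Z^\sigma_\Ss(e^\ba,\bl) \;=\; \int h(y)\,\tfrac{y-\theta}{\sigma^2}\,\rho_\sigma(y-\theta)\,dy \;=\; \tfrac1\sigma\,\EE_{\xi}\bigl[h(\theta+\sigma\xi)\,\xi\bigr]. \]
To match the form stated in the proposition, I would then subtract the vanishing term $\tfrac1\sigma h(\theta)\EE[\xi] = 0$ (valid since $\EE[\xi]=0$ and $h(\theta)$ does not depend on $\xi$), giving $\nabla_\theta\tilde Z^\sigma_\Ss = \EE_\xi\bigl[\tfrac{h(\theta+\sigma\xi)-h(\theta)}{\sigma}\xi\bigr]$; reading off the $\be$-block and the $\bd$-block of $\xi$, and unfolding $h(\theta+\sigma\xi) = Z_\Ss(e^{\ba+\sigma\be},\bl+\sigma\bd)$ and $h(\theta) = Z_\Ss(e^\ba,\bl)$, recovers exactly the two claimed gradient expressions. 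Continuity (indeed $C^\infty$-smoothness) of $\tilde Z^\sigma_\Ss$ follows from the same reasoning, since convolution with the $C^\infty$ kernel $\rho_\sigma$ inherits all of its regularity.

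I expect the only non-routine step to be the rigorous justification of swapping $\nabla_\theta$ with $\int$. This I would handle with the standard differentiation-under-the-integral-sign theorem via dominated convergence: for $\theta$ confined to any fixed bounded set $K$, the integrand $h(y)\rho_\sigma(y-\theta)$ is differentiable in $\theta$ for every $y$, and its gradient is bounded in absolute value by the $\theta$-independent function $M\sup_{\theta'\in K}\bigl|\tfrac{y-\theta'}{\sigma^2}\bigr|\rho_\sigma(y-\theta')$, which is integrable because it decays like a Gaussian times a polynomial. The key enabling fact is precisely the boundedness of $h$, which is where the finiteness of the sample set $\Ss$ and of the allocation set $\Aa$ gets used; once $h$ is bounded, the Gaussian kernel supplies all the regularity needed. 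I would close with a short remark that the zero-mean subtraction, beyond being cosmetic, is what gives the associated Monte Carlo gradient estimator its reduced variance, though this plays no role in the proposition itself.
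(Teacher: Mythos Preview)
Your proposal is correct and arrives at the same conclusion as the paper, but the route is genuinely different. The paper isolates a lemma for a general bounded function $W$ and proves differentiability of its Gaussian smoothing by computing the difference quotient directly: it shifts the perturbation $\Delta x_1$ into the Gaussian density, Taylor-expands that density to second order with Lagrange remainder, and then spends most of the effort on a careful case analysis (splitting the integration domain according to the size of $u_1$) to show the remainder integral stays bounded so that the extra factor of $\Delta x_1$ kills it. Only afterward does the paper invoke a convolution representation, and then solely to verify continuity of the partial derivative. By contrast, you go straight to the convolution form $\int h(y)\rho_\sigma(y-\theta)\,dy$, push the gradient onto the smooth kernel, and justify the swap in one stroke with dominated convergence using a Gaussian-times-polynomial envelope over a compact parameter set.

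Your argument is shorter and more standard; it also immediately gives $C^\infty$ regularity rather than just $C^1$. The paper's argument is more elementary in that it avoids the Leibniz/dominated-convergence machinery and works from first principles, at the cost of the explicit remainder bookkeeping. Both hinge on exactly the same structural fact you identified, namely that $Z_\Ss$ is bounded because $\Ss$ and $\Aa$ are finite, and both obtain the centered form of the gradient by subtracting the mean-zero term $h(\theta)\EE[\xi]$.
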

According to \cref{proposition:Z:smooth}, we can unbiasedly estimate the gradient of $\Tilde{Z}_\Ss^\sigma(e^{\bm\alpha}, \bm\lambda)$ by Monte Carlo sampling, which is done by computing the average of numerical differentiations in $n_r$ random directions generated by Gaussian sampling:
\begin{equation*}
\begin{aligned}
    &\nabla_{\bm\alpha} \tilde{Z}_{\Ss}(e^{\bm\alpha}, \bm\lambda) \approx \tilde{\nabla}_{\bm\alpha} \tilde{Z}_{\Ss}(e^{\bm\alpha}, \bm\lambda) = \frac{1}{n_r}\sum_{i=1}^{n_r} \frac{1}{\sigma}\left(Z_{\Ss}(e^{\bm\alpha + \sigma\bm\epsilon_i}, \bm\lambda + \sigma\bm\delta_i) - Z_{\Ss}(e^{\bm\alpha}, \bm\lambda)\right) \bm\epsilon_i, \\
    &\nabla_{\bm\lambda} \tilde{Z}_{\Ss}(e^{\bm\alpha}, \bm\lambda) \approx \tilde{\nabla}_{\bm\lambda} \tilde{Z}_{\Ss}(e^{\bm\alpha}, \bm\lambda) = \frac{1}{n_r}\sum_{i=1}^{n_r} \frac{1}{\sigma}\left(Z_{\Ss}(e^{\bm\alpha + \sigma\bm\epsilon_i}, \bm\lambda + \sigma\bm\delta_i) - Z_{\Ss}(e^{\bm\alpha}, \bm\lambda) \right) \bm\delta_i,
\end{aligned}
\end{equation*}
where $\bm\epsilon_i \sim N(0, 1)^n$, and $\bm\delta_i \sim N(0, 1)^{n\times 2^m}$. 
It takes totally $n_r$ additional DPs to get the estimated gradient $\tilde{\nabla}{Z}_{\Ss}(e^{\bm\alpha}, \bm\lambda)$, based on which we can optimize $\Tilde{Z}_{\Ss}^\sigma(e^{\bm\alpha}, \bm\lambda)$ through gradient ascent.
This approach offers an approximate method for optimizing $Z_\Ss(e^{\bm\alpha}, \bm\lambda)$.

\begin{remark}
    It is also feasible to optimize $R_\Ss(e^\ba, \bl)$ using the same Gaussian smoothing technique as $Z_\Ss(e^\ba, \bl)$. 
    However, this approach would require $(n + 1)n_r$ additional DPs to compute the estimated gradient.
    In comparison, our objective decomposition reduces the number of additional DPs to $n_r$, enhancing efficiency significantly.
\end{remark}

\section{Experiments}\label{sec:exp}

In this section, we present empirical experiments to evaluate the effectiveness of \name. 
These experiments are conducted on a Linux machine equipped with $4$ NVIDIA Graphics Processing Units (GPUs), each with a memory size of $11$GB. 
Each result is averaged across $5$ distinct runs. 
The standard deviation of \name\ across all these runs is below 1\%.

\subsection{Setup}\label{sec:exp:setup}

\paragraph{Auction Settings}
We consider a variety of valuation distributions, including both symmetric and asymmetric, as well as additive and combinatorial types. They are listed as follows:
\begin{enumerate}[label=(\Alph*),leftmargin=7mm,itemsep=0mm,topsep=0mm]
    \item \label{settingA} (Symmetric Uniform) For all bidder $i$ and item $j$, the valuation $v_{ij}$ is sampled from $U[0, 1]$.
    The valuation is additive, that is, for all item bundle $S$ and bidder $i$, $v_{i}(\Ss) = \sum_{j \in \Ss} v_{ij}$.
    This setting is widely used in previous researches~\citep{VVCA2015,dutting2019optimal,curry2022differentiable}.
    \item \label{settingB} 
    (Asymmetric Uniform) For all bidder $i$ and item $j$, the valuation $v_{ij}$ is sampled from $U[0, i]$.
    The valuation is additive but asymmetric for bidders.
    \item \label{settingC} 
    (Lognormal) For all bidder $i$ and item $j$, the valuation $v_{ij}$ is sampled from $\mathrm{Lognormal}(0, 1 / i^2)$.
    The valuation is additive but asymmetric for bidders.
    \item \label{settingD} 
    (Combinatorial) For all bidder $i$ and item $j$, the valuation $v_{ij}$ is sampled from $U[1, 2]$.
    For item bundle $\Ss \subseteq [m]$ and bidder $i$, the valuation $v_{i}(\Ss) = \sum_{j \in \Ss} v_{ij} + \epsilon_{i\Ss}$, where $\epsilon_{i\Ss}$ is sampled from $U[- |\Ss| / 2, |\Ss| / 2]$. 
    This setting is also used in \cite{VVCA2015}.
\end{enumerate}

\paragraph{Baselines} 
We compare OD-VVCA against both randomized and deterministic methods.
The randomized methods include:
\begin{enumerate}[leftmargin=7mm,itemsep=0mm,topsep=0mm]
    \item \emph{Lottery AMA}~\citep{curry2022differentiable}, a randomized AMA-based approach that directly sets the candidate allocations, bidder weights, and boost variables as all the learnable weights. 
    \item \emph{AMenuNet}~\citep{duan2023scalable}, similar to Lottery AMA, but uses a transformer-based architecture to optimize the parameters of AMA, hence more powerful and applicable to general contextual auctions. 
\end{enumerate} 
Note that we ignore the regret-based methods~\citep{dutting2019optimal} since they are not strictly IC, and thus not directly comparable to \name. 
The deterministic methods include:
\begin{enumerate}[leftmargin=7mm,itemsep=0mm,topsep=0mm]
    \item VCG~\citep{VCG}, which is the most classical special case of VVCA.
    \item Item-Myerson~\citep{myerson1981optimal}, a strong baseline that independently applies Myerson auction with respect to each item.
    \item BBBVVCA~\citep{VVCA2005,VVCA2015}, the pioneering work of optimizing VVCA parameters, which minimizes the revenue loss for each bidder-bundle by gradient descent. We use the same dynamic programming algorithm as \name\ to compute the winning allocation.
    \item FO-VVCA, an ablated version of \name\ which only optimizes $F_\Ss(e^\ba, \bl)$. 
\end{enumerate} 

\paragraph{Hyperparameters and Implementation}
For \name, we select the number of random directions $n_r$ from $\{8, 16, 32\}$ and the standard deviation $\sigma$ from $\{0.01, 0.003, 0.001\}$. 
The training starts at classic VCG, i.e., $\bmw = 1$ and $\bm\lambda = 0$. 
For Lottery AMA and AMenuNet, since the candidate allocation set cannot be arbitrarily large due to the constraint of computational resources, we select the candidate size from $\{32, 128, 1024, 4096\}$ to balance their performance and computational feasibility. 
Further implementation details of \name\ and baseline methods can be found in \cref{app:implementation}.

\begin{table*}[t]
    \centering
    \caption{
        The revenue results in both small and large settings. 
        For each setting, the highest revenue among all methods is \textbf{bolded}, and the highest revenue among all deterministic methods (the methods except for Lottery AMA and AMenuNet) is \underline{underlined}. 
        It is important to note that we do not report the results of Item-Myerson, Lottery AMA, and AMenuNet for setting~\ref{settingD}. 
        This is because Item-Myerson is not DSIC in a non-additive scenario, and Lottery AMA and AMenuNet are not applicable for combinatorial valuations.
    }
    \vspace{5pt}
    \subtable[Revenue results in small-scale settings.]{
        \resizebox{\textwidth}{!}{
        \begin{tabular}{lcccccccccc}
            \toprule
            \multirow{2}{*}{Method} & \multicolumn{5}{c}{Symmetric} & \multicolumn{5}{c}{Asymmetric} \\
            \cmidrule(l{10pt}r{10pt}){2-6} \cmidrule(l{10pt}r{10pt}){7-11}
            & 2$\times$2\ref{settingA} & 2$\times$5\ref{settingA} & 3$\times$10\ref{settingA} & 2$\times$2\ref{settingD} & 3$\times$10\ref{settingD} &  5$\times$3\ref{settingB} & 3$\times$10\ref{settingB} & 2$\times$5\ref{settingC} & 5$\times$3\ref{settingC} & 3$\times$10\ref{settingC}\\
            \midrule
            Lottery AMA  & \textbf{0.8680} & 2.2354 & 5.3450 & - & - & 6.5904 & 11.9814 & \textbf{5.7001}& 4.0195 & 12.9863 \\
            AMenuNet  & 0.8618 & \textbf{2.2768} & 5.5986 & - & - &  6.7743& 12.3419 &5.6512 & 4.1919 & 13.6094  \\
            \midrule
            VCG & 0.6678 & 1.6691 & 5.0032 & 2.4576 & 15.6305 &  6.0470 & 8.9098 &3.8711& 3.7249 & 10.6495\\
            Item-Myerson & \underline{0.8330} & 2.0755 & 5.3141 & - & - & 5.3909 & 8.9110 &4.4380  & 3.6918 & 9.3740\\
            BBBVVCA & 0.7781 & 2.2576 & 5.7647 & 2.6111 & 16.3504 &  6.6118 & 10.5134 &4.8174 & 4.1879 & 11.4605\\
            FO-VVCA  & 0.7836 & \underline{2.2638} & 5.7876 & 2.6205 & \textbf{\underline{16.3786}} &  6.8783 & 11.8093 & 5.5536& 4.1906 & 13.0144\\
            \name  & 0.8284 & 2.2632 & \textbf{\underline{5.8230}} & \textbf{\underline{2.6802}} & 16.3675 &\textbf{\underline{7.0344}} & \textbf{\underline{12.5497}} & \underline{5.6682}  & \textbf{\underline{4.3289}} & \textbf{\underline{13.6223}} \\
            \bottomrule
        \end{tabular}
        \label{tab:revenue.small}
        }
    }
    \subtable[Revenue results in large-scale settings.]{
        \resizebox{\textwidth}{!}{  
        \begin{tabular}{lcccccccccc}
            \toprule
            \multirow{2}{*}{Method} & \multicolumn{4}{c}{Symmetric} & \multicolumn{6}{c}{Asymmetric} \\
            \cmidrule(l{10pt}r{10pt}){2-5} \cmidrule(l{10pt}r{10pt}){6-11}
            & 5$\times$10\ref{settingA} & 10$\times$5\ref{settingA} & 5$\times$10\ref{settingD} & 10$\times$5\ref{settingD} & 5$\times$10\ref{settingB} & 10$\times$5\ref{settingB} & 30$\times$5\ref{settingB} & 5$\times$10\ref{settingC} & 10$\times$5\ref{settingC} & 30$\times$5\ref{settingC} \\
            \midrule
            Lottery AMA  & 5.5435 & 3.0210 & - & - & 21.4092 & 24.3684 & 74.1259 & 13.0961 & 6.1878 & 4.8812 \\
            AMenuNet  & 6.5210 & 3.8156 & - & - & 21.7912 & 27.4230  & 88.2373 &13.6266& 7.0347& 6.1055  \\
            \midrule
            VCG & 6.6690 & 4.0887 & 17.2580 & \textbf{\underline{9.4405}} &20.1567 & 26.7979& 105.4290 & 12.4163& 6.5472& 6.6054 \\
            Item-Myerson  & 6.7132 & 4.0868 & - & - & 17.9697 & 24.7287 & 102.1878 & 12.3060 & 7.4868 & 7.6173   \\
            BBBVVCA  & 6.9765 & 4.1038 & 17.4657 & 9.4356 & 22.4459 &28.2620 &107.5630& 14.4585 &7.4899&7.4989\\
            FO-VVCA   & \textbf{\underline{6.9904}} & \textbf{\underline{4.1107}} & \textbf{\underline{17.5026}} & 9.4397 & 23.1474 & 28.4363 & 107.8873 &14.6641 & 7.2678 & 7.0965 \\
            \name  & 6.9848 & 4.0830 & 17.4614 & 9.4388 &\textbf{\underline{24.4301}} &     \textbf{\underline{28.9711}} &\textbf{\underline{108.2780}}&\textbf{\underline{14.8071}}& \textbf{\underline{7.5882}} & \textbf{\underline{7.6446}}\\
            \bottomrule
        \end{tabular}
        }
    }
    \label{tab:revenue.large}
    \label{tab:revenue}
\end{table*}

\subsection{Revenue Experiments}

The results of revenue experiments are presented in \cref{tab:revenue}, where we use the notation $n \times m (X)$ to denote an auction with $n$ bidders and $m$ items of setting $(X)$. 
All the machine learning-based methods take comparable training time, which we discuss in \cref{app:time}.

Among the deterministic baselines, \name\ outperforms them in most settings, especially in large-scale asymmetric scenarios. 
Specifically, the comparison between \name\ and VCG demonstrates that integrating affine parameters into VCG significantly enhances revenue, and the comparison between \name\ and Item-Myerson indicates the strong revenue performance of \name. 
OD-VVCA is also distinctly advantageous over BBBVVCA, which optimizes VVCA parameters under a fixed winning allocation and overlooks the impact of these parameters on allocation outcomes.
Additionally, \name\ consistently outperforms FO-VVCA across various asymmetric and most symmetric settings, underscoring the efficacy of the Gaussian smoothing technique to optimize $Z_\Ss(e^\ba, \bl)$.

Among the randomized baselines, Lottery AMA and AMenuNet both demonstrate commendable performance in smaller settings. 
However, in larger settings, as shown in \cref{tab:revenue.large}, \name\ consistently outperforms these two randomized AMA methods. 
This outcome suggests that within the constraint of computational resource limitations, \name\ is more effective than Lottery AMA and AMenuNet, particularly in handling larger settings.

\subsection{Case Study}

\begin{figure}[t]
    \centering
    \includegraphics[width = \linewidth, trim=0 0 0 0,clip]{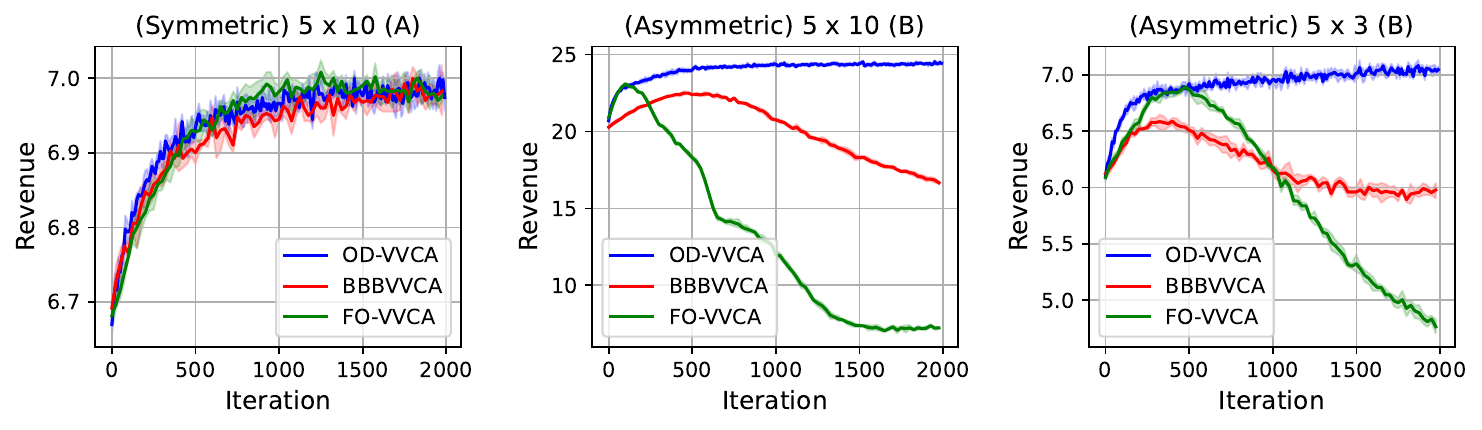}
    \caption{
        Training curves for OD-VVCA, BBBVVCA, and FO-VVCA under $5 \times 10$\ref{settingA}, $5 \times 10$\ref{settingB} and $5\times 3$\ref{settingB}. 
        The plots depict average results across $5$ runs and the $95\%$ 
    }
    \label{fig:loss}
\end{figure}

To further illustrate the advantages of \name\ over BBBVVCA and FO-VVCA, we present the revenue trajectories of them during training in symmetric ($5\times 10$)\ref{settingA} and asymmetric ($5\times 10$\ref{settingB} and $5\times 3$\ref{settingB}) scenarios in \cref{fig:loss}. 
We can see that while all three methods converge to similar revenue in the symmetric $5\times 10$\ref{settingA} scenario, the revenue of BBBVVCA and FO-VVCA declines significantly in the two asymmetric settings. 
The instability of BBBVVCA highlights the drawbacks of neglecting the influence of VVCA parameters on winning allocation. 
Meanwhile, the comparison between OD-VVCA and OD-VVCA indicates the effectiveness and stability of incorporating the Gaussian smoothing technique in optimizing VVCA parameters.

\begin{figure*}[t]
    \centering
    \includegraphics[width=\linewidth]{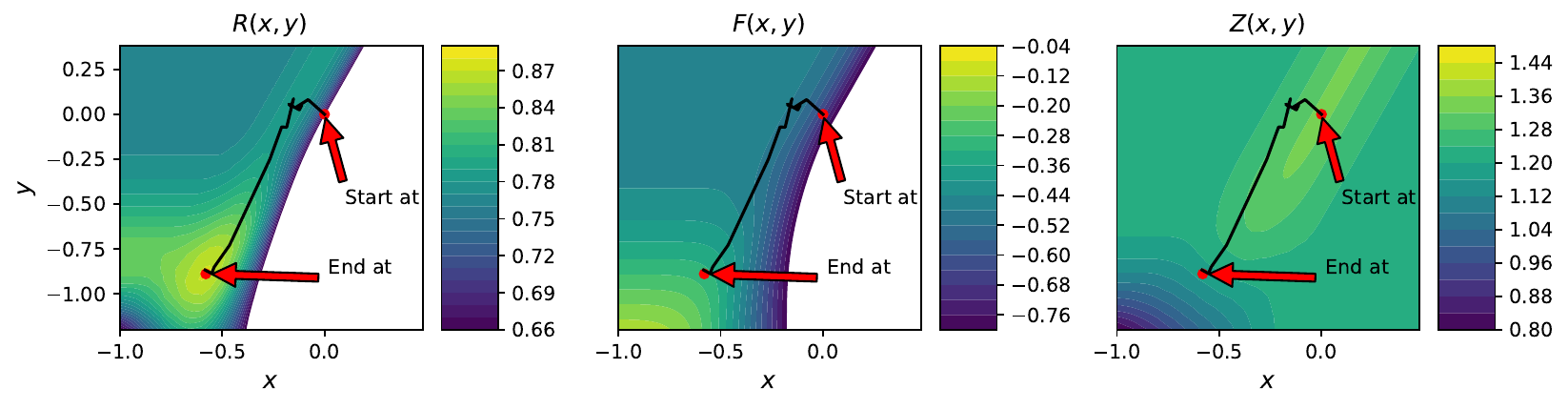}
    \caption{
        Training dynamics of OD-VVCA over $2000$ iterations in $2\times 2$ (\ref{settingA}). 
        The trajectory of $(x, y)$ is simultaneously plotted on $R(x, y)$, $F(x, y)$, and $Z(x, y)$. 
        The trajectory converges at a point balancing the values of both $F(x, y)$ and $Z(x, y)$, illustrating the effectiveness of optimizing $R(x, y)$ through objective decomposition.
    }
    \label{fig:dynamics}
\end{figure*}

To visualize the training process of OD-VVCA, we adopt a symmetric simplification for bidders and items in $2\times 2$ (\ref{settingA}). 
Specifically, we fix the bidder weights $w_1=w_2=1$ and $\lambda_1(\emptyset)=\lambda_2(\emptyset)=0$, then we set other boosts as: 
\begin{equation*}
   \lambda_1(\{1\})=\lambda_2(\{1\})=\lambda_1(\{2\})=\lambda_2(\{2\})=x, \quad \lambda_1(\{1, 2\})=\lambda_2(\{1, 2\})=y.
\end{equation*}
Under this simplification, when fixing a dataset of valuation $\mathcal{S}$, we use $R(x, y)$, $F(x, y)$, and $Z(x, y)$ to denote $R_{\mathcal{S}}(\bm{w}, \bm{\lambda})$, $F_{\mathcal{S}}(\bm{w}, \bm{\lambda})$, $Z_{\mathcal{S}}(\bm{w}, \bm{\lambda})$, respectively. 
In \cref{fig:dynamics}, we illustrate the trajectory of $(x, y)$ concurrently on $R(x, y)$, $F(x, y)$, and $Z(x, y)$. 
Here, $R(x, y)= F(x, y)+ Z(x, y)$ according to \cref{eq:revenue:empirical}. 
The trajectory initiates from a VCG auction (i.e., $x = y = 0$), and converges to an improved solution that balances the values of both $F(x, y)$ and $Z(x, y)$. 
This outcome underscores the efficacy of our Gaussian smoothing technique to optimize $Z(x, y)$.

\begin{figure}[t]
    \centering
    \includegraphics[width=\linewidth]{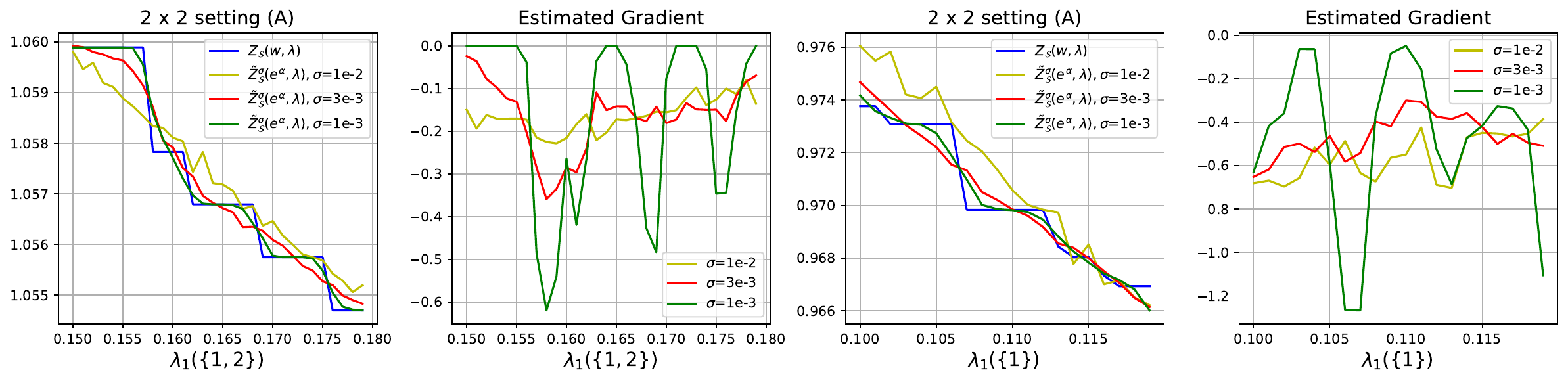}
    \caption{
        Visualization of $Z_{\Ss}(e^\ba, \bl)$, $\tilde{Z}^\sigma_{\Ss}(e^\ba, \bl)$, and the corresponding estimated gradient $\tilde\nabla \tilde{Z}_{\Ss}(e^\ba, \bl)$ in $2\times 2$\ref{settingA} with respect to one parameter. 
    }
    \label{fig:gradient}
\end{figure}

To further illustrate the impact of the Gaussian smoothing technique in OD-VVCA, we explore a $2\times2$ scenario (\ref{settingA}), adjusting only one of the boost variables $\lambda$ while maintaining the remaining VVCA parameters fixed. 
We plot $Z_{\mathcal{S}}(\bm{w}, \bm{\lambda})$ alongside the computed $\tilde{Z}^{\sigma}_{\mathcal{S}}(\bm{w}, \bm{\lambda})$ for three distinct $\sigma$ values in \cref{fig:gradient}. 
The figure shows that $Z_{\mathcal{S}}(\bm{w}, \bm{\lambda})$ experiences non-differentiability at several points. 
The Gaussian smoothing technique addresses this by smoothing $Z_{\mathcal{S}}(\bm{w}, \bm{\lambda})$, with varying smoothness achieved by different $\sigma$ values. 
Specifically, a smaller $\sigma$ provides a closer approximation of $\tilde{Z}^{\sigma}_{\mathcal{S}}(\bm{w}, \bm{\lambda})$ to $Z_{\mathcal{S}}(\bm{w}, \bm{\lambda})$, and a larger $\sigma$ results in a less precise approximation but more effectively captures the general trend with a flatter gradient. 
Though these insights emphasize the criticality of careful parameter selection in OD-VVCA, they do not necessarily imply difficulty in tuning the hyperparameters. 
In practice, our experiments indicate that $\sigma$ values within $\{0.01, 0.003, 0.001\}$ are generally well-suited for most scenarios.

\section{Conclusion and Future Work}\label{sec:conclusion}

In this paper, we introduce OD-VVCA, an objective decomposition approach for designing deterministic Virtual Valuation Combinatorial Auctions (VVCAs). 
Our method employs a parallelizable dynamic programming algorithm to compute the winning allocation and revenue in a VVCA efficiently. 
Subsequently, we decompose the objective function of revenue into continuous and discontinuous components.
The continuous portion can be directly optimized by taking derivatives.   
For the discontinuous part, we use Monte Carlo estimation to approximate the gradient of its Gaussian-smoothed version. 
Finally, we demonstrate the efficacy, stability, and scalability of OD-VVCA in various auction settings through experimental results.

As for future work, while OD-VVCA is tailored for general combinatorial auctions, exploring its adaptation for industrial applications presents an intriguing avenue. 
Additionally, given that the Gaussian smoothing technique requires extra iterations to approximate the gradient via Monte Carlo sampling, it would be valuable to investigate more efficient methods for gradient approximation.
	
	\bibliographystyle{plainnat}
	\bibliography{reference}
	
	\appendix
	\newpage
	\appendix
\section{About the Winning Allocation Deterministic Dynamic Programming} \label{app:DP}

In this section, we further discuss our proposed dynamic programming (DP) algorithm for winning allocation determination in \cref{sec:method:DP}.

\subsection{The Parallel implementation of the DP}

For a single valuation $V$, given the bidder $i$ and the item bundle $\mathcal{S}$, we can compute $\mathrm{MAW}(i, S)$ and $\mathrm{AB}(i, S)$ according to \cref{eq:DP} using vector operations, which can be parallelized by the GPU.
Firstly, for all item bundle $\mathcal{S} \subseteq [m]$, we denote the set of all subsets of $\mathcal{S}$ as $\overrightarrow{\mathcal{B}(\mathcal{S})} \coloneqq (\mathcal{B})_{\mathcal{B} \subseteq \mathcal{S}}$, which is a $2^{|\mathcal{S}|}$-dimensional vector. 
On top of that, we can compute the intermediate result $\mathrm{Int}(i, \mathcal{S}, \overrightarrow{\mathcal{B}(\mathcal{S})}) = (\mathrm{Int}(i, \mathcal{S}, \Bb))_{\Bb \in \overrightarrow{\Bb(\Ss)}}$ by vector operations:
\begin{equation*}
    \mathrm{Int}(i, \mathcal{S}, \overrightarrow{\mathcal{B}(\mathcal{S})}) = \mathrm{MAW}(i - 1, \mathcal{S} - \overrightarrow{\mathcal{B}(\mathcal{S})}) + w_i v_i(\overrightarrow{\mathcal{B}(\mathcal{S})}) + \lambda_i(\overrightarrow{\mathcal{B}(\mathcal{S})})
\end{equation*}
Subsequently, we can obtain $\mathrm{MAW}(i, \mathcal{S})$ and $\mathrm{AB}(i, \mathcal{S})$ by computing the maximum element of the vector $\mathrm{Int}(i, \mathcal{S}, \overrightarrow{\mathcal{B}(\mathcal{S})})$:
\begin{equation*}
\begin{aligned}
    \MAW(i, \Ss) = \max_{\Bb \in \overrightarrow{\Bb(\Ss)}} \mathrm{Int}(i, \Ss, \Bb), \quad \AB(i, \Ss) = \argmax_{\Bb \in \overrightarrow{\Bb(\Ss)}} \mathrm{Int}(i, \Ss, \Bb)
\end{aligned}
\end{equation*}
It is evident that all these operations are vector-based.

In summary, we provide the pseudocode of the parallel DP implementation for a single valuation in \cref{alg:DP:parallel}.
For the valuations in a minibatch $\Ss = {V^1, V^2, \dots, V^\Ss}$, we can compute all the DP computations through matrix operations. 
The reason is that, as discussed above, the DP computation for a single valuation involves vector operations. 
Thus, the computation for all valuations can be executed using matrix operations.

\begin{algorithm}[h]
    \caption{Winning Allocation Determination by Dynamic Programming}
    \label{alg:DP:parallel}
    \begin{algorithmic}[1]
        \STATE {\bfseries Input:} Valuation $V$, VVCA parameters $\bmw \in \RR_+^n$ and $\bm\lambda \in \RR^{n2^m}$. 
        \STATE {\bfseries Output:} Maximum affine welfare $\MAW^*$ and the corresponding allocation $A^*$.  \label{alg:DP:for:dp}
        
        \FOR {$\Ss \subseteq [m]$}\label{alg:DP:for:init}
            \STATE Initialize $\MAW(1, \Ss) \gets w_1v_1(\Ss) + \lambda_1(\Ss)$.
            \STATE Initialize $\AB(1, \Ss) \gets \Ss$.
            \STATE Initialize $\overrightarrow{\Bb(\Ss)} \gets (\Bb)_{\Bb \subseteq \Ss}$.
        \ENDFOR
        
        \FOR {$i = 2$ \textbf{to} $n$}
            \FOR {$\Ss \subseteq [m]$}
                \STATE Compute the intermediate result: \\
                \quad $\mathrm{Int}(i, \mathcal{S}, \overrightarrow{\mathcal{B}(\mathcal{S})}) = \mathrm{MAW}(i - 1, \mathcal{S} - \overrightarrow{\mathcal{B}(\mathcal{S})}) + w_i v_i(\overrightarrow{\mathcal{B}(\mathcal{S})}) + \lambda_i(\overrightarrow{\mathcal{B}(\mathcal{S})})$
                \STATE Compute the maximum affine welfare when allocating $\Ss$ to the previous $i$ bidders: \\
                \quad $\MAW(i, \Ss) = \max_{\Bb \in \overrightarrow{\Bb(\Ss)}} \mathrm{Int}(i, \Ss, \Bb)$
                \STATE Record the allocated bundle of bidder $i$: \\
                \quad $\AB(i, \Ss) \gets \argmax_{\Bb \in \overrightarrow{\Bb(\Ss)}} \mathrm{Int}(i, \Ss, \Bb)$
            \ENDFOR
        \ENDFOR
        \STATE $\Ss_n \gets \argmax_{\Ss \subseteq [m]} \MAW(n, \Ss)$.
        \FOR {$i = n$ \textbf{to} $1$} \label{alg:DP:for:alc}
            \STATE $A^*_i \gets \AB(i, \Ss_i)$, the allocated bundle of bidder $i$.
            \STATE $\Ss_{i-1} \gets \Ss_i - A^*_i$
         \ENDFOR
    
        \STATE \textbf{return} Winning Allocation $A^* = (A^*_1, A^*_2, \dots, A^*_n)$
    
    \end{algorithmic}
\end{algorithm}

\subsection{Time and Space Complexity Analysis}

\begin{proposition}
    The time and space complexity of the winning allocation deterministic dynamic programming are $O(n3m)$ and $O(n2m)$, respectively.
\end{proposition}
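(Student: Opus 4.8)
The plan is to charge the running time and space of \cref{alg:DP:parallel} phase by phase, in the standard unit-cost model where each arithmetic operation, comparison, and lookup into the (precomputed) tables $v_i(\cdot)$ and $\bm\lambda$ costs $O(1)$, and each $\max$ or $\argmax$ over a collection of size $k$ costs $O(k)$. The algorithm splits into three phases: (i) the initialization over all bundles $\Ss \subseteq [m]$; (ii) the main recursion over bidders $i = 2, \dots, n$ and bundles $\Ss \subseteq [m]$ that evaluates \cref{eq:DP}; and (iii) the backtracking step that first computes $\Ss_n = \argmax_{\Ss \subseteq [m]} \MAW(n, \Ss)$ and then reconstructs $A^*$ in $n$ constant-time updates. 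For the space bound I would note that the only persistent structures are the tables $\MAW(\cdot, \cdot)$ and $\AB(\cdot, \cdot)$, each holding one entry per pair $(i, \Ss)$ with $i \in [n]$ and $\Ss \subseteq [m]$, i.e.\ $n 2^m$ entries; since $\AB$ must be retained for all $i$ to support phase (iii), the algorithm uses $\Theta(n 2^m)$ memory --- the same order as the input valuation profile --- giving the claimed $O(n 2^m)$ bound.

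For the time bound, phases (i) and (iii) are immediate: phase (i) does $O(1)$ work per bundle, for $O(2^m)$ total, and phase (iii) performs one $\argmax$ over $2^m$ values plus $n$ constant-time updates, for $O(2^m + n)$. The crux is phase (ii). For a fixed bidder $i$, evaluating $\MAW(i, \Ss)$ and $\AB(i, \Ss)$ via \cref{eq:DP} enumerates every subset $\Bb \subseteq \Ss$ with $O(1)$ work per $\Bb$ (a lookup of $\MAW(i-1, \Ss \setminus \Bb)$, two additions, and a running maximum), so the work attributable to bidder $i$ is
\[
\sum_{\Ss \subseteq [m]} O\!\left(2^{|\Ss|}\right) = O\!\left(\sum_{k=0}^{m} \binom{m}{k} 2^k\right) = O\!\left((1+2)^m\right) = O(3^m),
\]
where the middle step is the binomial theorem. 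Summing over the $n - 1$ bidders handled by the loop and adding phases (i) and (iii) yields $O(n 3^m)$ overall.

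I expect the only non-routine ingredient to be the identity $\sum_{\Ss \subseteq [m]} 2^{|\Ss|} = 3^m$ --- equivalently, that the number of pairs $(\Ss, \Bb)$ with $\Bb \subseteq \Ss \subseteq [m]$ equals $3^m$, since each of the $m$ items independently lies in $\Bb$, in $\Ss \setminus \Bb$, or outside $\Ss$ --- after which everything is bookkeeping. A closing remark should observe that the parallel/matrix implementation discussed in \cref{sec:method:DP} does not change these asymptotics: vectorizing the inner $\max$/$\argmax$ over $\overrightarrow{\Bb(\Ss)}$ only regroups the same $O(3^m)$ scalar operations per bidder, and processing a minibatch of valuations multiplies both the time and space bounds uniformly by the batch size, so the per-valuation complexity is exactly $O(n 3^m)$ time and $O(n 2^m)$ space.
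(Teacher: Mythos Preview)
Your proposal is correct and mirrors the paper's proof essentially step for step: both split the algorithm into the same three phases, both reduce the main recursion's cost to $\sum_{k=0}^{m}\binom{m}{k}2^{k}=3^{m}$ per bidder via the binomial theorem, and both obtain the space bound by counting the $n\cdot 2^{m}$ entries of $\MAW$ and $\AB$. Your write-up is somewhat more explicit (e.g., the alternate combinatorial reading of $3^{m}$ and the remark on the batched/parallel variant), but there is no substantive difference in approach.
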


\begin{proof}

Clearly, the initialization loop and the final loop involve $2^m$ and $n$ enumerations, respectively. 
In the dynamic programming loop, for each bidder $i \in [2, n]$ and bundle $|\Ss| = k$, we need to enumerate all $2^k$ subsets of $\Ss$. 
Consequently, the total number of enumerations for bidder $i$ is given by
$
    \sum_{k=0}^m \binom{m}{k} 2^k = (2 + 1)^m = 3^m.
$
Since the computation during each enumeration is $O(1)$, the overall time complexity is $O(2^m) + O((n-1)3^m) + O(n) = O(n3^m)$.

Regarding space complexity, apart from the $O(n2^m)$ needed for input valuations and VVCA parameters, we require an additional $O(n2^m)$ space to store $\MAW$ and $\AB$. Notably, there is no need to store $\MAW'$. Therefore, the total space complexity is $O(n2^m)$.
    
\end{proof}

\section{Proof of \cref{proposition:Z:smooth}}
\propositionZSmooth*

To prove \cref{proposition:Z:smooth}, we first present a useful lemma:

\begin{lemma} \label{lemma:W:differentiable}
    Let $W: \RR^n \to \RR$ be a bounded function such that $\max_{\bmx \in \RR^n} |W(x)| \le M_W$, and denote $\Tilde{W}^\sigma \coloneqq \EE_{\bnoise \sim N(0, 1)^n}[W(\bmx + \sigma\bnoise)]$ with $\sigma > 0$ as the Gaussian smoothing approximation of $W$. Then $\Tilde{W}^\sigma$ is differentiable.  
\end{lemma}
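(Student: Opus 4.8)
The plan is to establish Lemma~\ref{lemma:W:differentiable} by the standard technique of moving the Gaussian density into the role of the smooth object: rewrite $\Tilde{W}^\sigma(\bmx)$ as a convolution of the bounded function $W$ with the (rescaled) Gaussian density, change variables so that all the $\bmx$-dependence sits in the smooth kernel rather than in $W$, and then differentiate under the integral sign. Concretely, with $\phi_\sigma(\bmy)=(2\pi\sigma^2)^{-n/2}\exp(-\|\bmy\|^2/(2\sigma^2))$ the density of $N(0,\sigma^2)^n$, substituting $\bmy=\sigma\bnoise$ gives
\begin{equation*}
    \Tilde{W}^\sigma(\bmx) = \int_{\RR^n} W(\bmx+\bmy)\,\phi_\sigma(\bmy)\,d\bmy = \int_{\RR^n} W(\bmz)\,\phi_\sigma(\bmz-\bmx)\,d\bmz,
\end{equation*}
so now the integrand depends on $\bmx$ only through the infinitely differentiable function $\bmz\mapsto\phi_\sigma(\bmz-\bmx)$.

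First I would verify the hypotheses needed to differentiate under the integral (Leibniz rule / dominated convergence): the integrand is measurable in $\bmz$ for each $\bmx$, differentiable in $\bmx$ for each $\bmz$ with $\nabla_{\bmx}\phi_\sigma(\bmz-\bmx)=\frac{\bmz-\bmx}{\sigma^2}\phi_\sigma(\bmz-\bmx)$, and — using $|W|\le M_W$ — the partial derivatives are dominated, uniformly for $\bmx$ in any bounded neighborhood, by an integrable function of $\bmz$ of the form $M_W\cdot\frac{C+\|\bmz\|}{\sigma^2}\phi_{\sigma'}(\bmz)$ for suitable constants; Gaussian moments are finite, so this dominating function is integrable. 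Hence $\Tilde{W}^\sigma$ is differentiable (indeed $C^\infty$, but we only need $C^1$) with
\begin{equation*}
    \nabla \Tilde{W}^\sigma(\bmx) = \int_{\RR^n} W(\bmz)\,\frac{\bmz-\bmx}{\sigma^2}\,\phi_\sigma(\bmz-\bmx)\,d\bmz = \EE_{\bnoise\sim N(0,1)^n}\!\left[W(\bmx+\sigma\bnoise)\,\frac{\bnoise}{\sigma}\right],
\end{equation*}
where the last equality undoes the substitution. This already gives the Stein-type identity that \cref{proposition:Z:smooth} will specialize; the subtraction of $Z_\Ss(e^{\ba},\bl)$ in the proposition's formulas is harmless because $\EE[\bnoise]=0$, so $\EE[c\,\bnoise/\sigma]=0$ for any constant $c$.

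I do not expect a serious obstacle here, since $W$ is merely assumed bounded (not continuous), which is exactly the regime where this convolution-smoothing argument shines — all regularity is borrowed from the Gaussian kernel, and no continuity of $W$ is used anywhere. The one point requiring a little care is the domination step: one must confirm that $\sup_{\bmx\in K}\big|W(\bmz)\tfrac{z_j-x_j}{\sigma^2}\phi_\sigma(\bmz-\bmx)\big|$ is integrable in $\bmz$ for compact $K$, which follows by bounding $\phi_\sigma(\bmz-\bmx)\le \mathrm{const}\cdot\phi_{\sqrt2\,\sigma}(\bmz)$ on $K$ (completing the square / crude triangle-inequality estimate on the exponent) and using finiteness of first Gaussian moments. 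That is the only place where a short computation is needed; everything else is bookkeeping. Applying the lemma coordinatewise to $W=Z_\Ss$ viewed as a function of the stacked variable $(\ba,\bl)\in\RR^{n+n2^m}$ — noting $Z_\Ss$ is bounded because valuations lie in a bounded range and the allocation-dependent welfare is a finite max of bounded linear functionals — then yields \cref{proposition:Z:smooth} directly, with the $\bm\epsilon$ and $\bm\delta$ components of the gradient read off as the corresponding blocks of $\nabla\Tilde{W}^\sigma$.
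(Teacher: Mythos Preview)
Your approach is correct and reaches the same derivative formula as the paper, but via a cleaner route. The paper computes each partial derivative from first principles: it writes out the difference quotient, shifts variables so the increment $\Delta x_1$ appears as a shift of the Gaussian mean, Taylor-expands the density with Lagrange remainder, and then bounds the remainder term through a three-case analysis on the range of the integration variable $u_1$; afterwards it separately establishes continuity of the partial by recasting it as a convolution and applying an $L^1$-continuity-of-translation argument. You instead invoke the standard Leibniz rule for differentiation under the integral, with the domination step (your estimate $\phi_\sigma(\bmz-\bmx)\le C_K\,\phi_{\sqrt2\,\sigma}(\bmz)$ for $\bmx$ in a compact set $K$, combined with finiteness of first Gaussian moments) doing all the work and delivering both existence and continuity of the partials at once. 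Both arguments rest on the same idea --- push the $\bmx$-dependence into the smooth Gaussian kernel and use only boundedness of $W$ --- but your version packages into one theorem citation what the paper spells out over a page of explicit estimates. The trade-off is self-containment versus brevity; mathematically the two are equivalent, and your observation that the centering term $-Z_\Ss(e^{\ba},\bl)$ in \cref{proposition:Z:smooth} is harmless because $\EE[\bnoise]=0$ matches the paper's final line in \cref{eq:derivative}.
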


\begin{proof}   

For any $\bmx = (x_1, x_2, \dots, x_n)$, without loss of generality, let's examine the differentiability of $\Tilde{W}^\sigma(\bmx)$ with respect to $x_1$. 
We define $\bmx_{-1} = (x_2, x_3, \dots, x_n)$, and for any $\bnoise \in \RR^n$, we denote $\bnoise_{-1} = (\noise_1, \bnoise_{-1})$.  

\paragraph{Existence of $\frac{\partial \Tilde{W}^\sigma(\bmx)}{\partial x_1}$}
We prove the existence of $\frac{\partial \Tilde{W}^\sigma(\bmx)}{\partial x_1}$ by deriving according to the definition of partial derivative: 
\begin{equation}\label{eq:derivative}
\begin{aligned}
    &\lim_{\Delta x_1\rightarrow0} \frac{\Tilde{W}^{\sigma}(x_1+\Delta x_1,\bmx_{-1})-\Tilde{W}^{\sigma}(x_1,\bmx_{-1})}{\Delta x_1}\\
    &=\lim_{\Delta x_1\rightarrow0} \frac{ \EE_{\bnoise \sim N(0, I_n)}[W(x_1 + \Delta x_1 + \sigma\noise_1, \bmx_{-1} + \sigma\bnoise_{-1})] - \EE_{\bnoise \sim N(0, I_n)}[W(\bmx + \sigma\bnoise)] }{\Delta x_1} \\
    &= \lim_{\Delta x_1\rightarrow0} \frac{ \EE_{\bnoise' \coloneqq (\noise'_1, \bnoise_{-1}) \sim N((\frac{\Delta x_1}{\sigma}, \bm{0}_{n-1}), I_n)}[W(\bmx + \sigma\bnoise') ] - \EE_{\bnoise \sim N(0, I_n)}[W(\bmx + \sigma\bnoise)] }{\Delta x_1}\\
    &=\lim_{\Delta x_1\rightarrow0}  \left(\frac{1}{2\pi}\right)^\frac{n}{2}  \int W(\bm{x}+\sigma \bnoise) \frac{\exp\left( -\frac{1}{2}\|(\noise_1-\frac{\Delta x_1}{\sigma},\bnoise_{-1})\|^2\right)-\exp\left( -\frac{1}{2}\|(\noise_1, \bnoise_{-1})\|^2\right)}{\Delta x_1}d\bnoise \\
    &\overset{(a)}{=} \left(\frac{1}{2\pi}\right)^\frac{n}{2}  \int W(\bm{x}+\sigma \bnoise)\frac{\noise_1}{\sigma}\exp\left(-\frac{1}{2}\|\bnoise\|^2\right) d\bnoise \\
    &= \EE_{\bnoise\sim N(0, I)} \left[ \frac{W(\bmx + \sigma\bnoise)}{\sigma}\noise_1 \right]\\
    &= \EE_{\bnoise\sim N(0, I)} \left[ \frac{W(\bmx + \sigma\bnoise) - W(\bmx)}{\sigma}\noise_1 \right], \quad\text{since } \EE_{\bnoise\sim N(0, I)}[W(\bmx)\noise_1] = 0,
\end{aligned}
\end{equation}
where $(a)$ in \cref{eq:derivative} essentially comes from Taylor expansion. 
To provide its proof in detail, we first define $f(t) \coloneqq \exp\left( -\frac{1}{2}\|(t,\bnoise_{-1})\|^2\right)$. Then we use Taylor expansion of $f(\noise_1-\frac{\Delta x_1}{\sigma})$ at $\noise_1$ with Lagrange remainder, that is,
\begin{equation*}
\begin{aligned}
   &f(\noise_1-\frac{\Delta x_1}{\sigma}) =  f(\noise_1) +  f'(\noise_1)(-\frac{\Delta x_1}{\sigma}) + f''(\xi(\Delta x_1,u
   _1))\frac{(\Delta x_1)^2}{\sigma^2}\\   
   =& f(\noise_1) +  \noise_1\exp\left(-\frac{1}{2}\|(\noise_1, \bnoise_{-1})\|^2\right)\frac{\Delta x_1}{\sigma} \\ &+  (\xi^2(\Delta x_1,u_1)-1)\exp\left(-\frac{1}{2}\|(\xi (\Delta x_1,u_1), \bnoise_{-1})\|^2\right)\frac{(\Delta x_1)^2}{\sigma^2}\\
\end{aligned}
\end{equation*}
where $\xi (\Delta x_1,u_1)$ is between $u_1-\frac{\Delta x_1}{\sigma}$ and $u_1$.  
On top of that, $(a)$ is derived by
\begin{equation}\label{eq:derivative:Taylor}
\begin{aligned}
&\lim_{\Delta x_1\rightarrow0} \int W(\bm{x}+\sigma \bnoise) \frac{\exp\left( -\frac{1}{2}\|(\noise_1-\frac{\Delta x_1}{\sigma},\bnoise_{-1})\|^2\right)-\exp\left( -\frac{1}{2}\|(\noise_1, \bnoise_{-1})\|^2\right)}{\Delta x_1}d\bnoise \\ 
=& \lim_{\Delta x_1\rightarrow0} \int W(\bm{x}+\sigma \bnoise)d\bnoise \\ &\cdot \frac{\noise_1\exp\left(-\frac{1}{2}\|(\noise_1, \bnoise_{-1})\|^2\right)\frac{\Delta x_1}{\sigma} + (\xi^2(\Delta x_1,u_1)-1)\exp\left(-\frac{1}{2}\|(\xi (\Delta x_1,u_1), \bnoise_{-1})\|^2\right)\frac{(\Delta x_1)^2}{\sigma^2}}{\Delta x_1} \\
=& \lim_{\Delta x_1\rightarrow0} \int W(\bm{x}+\sigma \bnoise)d\bnoise \\ &\cdot \left(
        \frac{\noise_1}{\sigma}\exp\left(-\frac{1}{2}\|\bnoise\|^2\right) + (\xi^2(\Delta x_1,u_1)-1)\exp\left(-\frac{1}{2}\|(\xi (\Delta x_1,u_1), \bnoise_{-1})\|^2\right)\frac{\Delta x_1}{\sigma^2} 
    \right) \\
\overset{(b)}{=}& \lim_{\Delta x_1\rightarrow0} \int W(\bm{x}+\sigma \bnoise) \frac{\noise_1}{\sigma}\exp\left(-\frac{1}{2}\|\bnoise\|^2\right)d\bnoise\\
=&\int W(\bm{x}+\sigma \bnoise) \frac{\noise_1}{\sigma}\exp\left(-\frac{1}{2}\|\bnoise\|^2\right)d\bnoise,
\end{aligned}
\end{equation}
where $(b)$ is equivalent to the truth of the following equation:
\begin{equation}\label{eq:derivative:Taylor:(b)}
    \begin{aligned}
        & \lim_{\Delta x_1 \to 0} \left| \Delta x_1\int W(x+\sigma \bnoise)(\xi^2(\Delta x_1,u_1)-1)\exp\left(-\frac{1}{2}\|(\xi (\Delta x_1,u_1), \bnoise_{-1})\|^2\right)d\bnoise \right| \\
        \le& \lim_{\Delta x_1 \to 0} |\Delta x_1|\cdot \Big|W(x+\sigma \bnoise)\Big|\cdot \left|\int (\xi^2(\Delta x_1,u_1)-1)\exp\left(-\frac{1}{2}\|(\xi (\Delta x_1,u_1), \bnoise_{-1})\|^2\right)d\bnoise \right| \\
        \le& \lim_{\Delta x_1 \to 0} |\Delta x_1|\cdot M_W \cdot \left|\int (\xi^2(\Delta x_1,u_1)-1)\exp\left(-\frac{1}{2}\|(\xi (\Delta x_1,u_1), \bnoise_{-1})\|^2\right)d\bnoise \right| \\
        =& \lim_{\Delta x_1 \to 0} |\Delta x_1|\cdot M_W \cdot \left|\int K(\xi(\Delta x_1,u_1),\bnoise_{-1})d\bnoise \right| \\
        \le& \lim_{\Delta x_1 \to 0} |\Delta x_1|\cdot M_W \cdot \int \Big| K(\xi(\Delta x_1,u_1),\bnoise_{-1}) \Big| d\bnoise \\
    \end{aligned}
\end{equation}
where we define $K(s,\bnoise_{-1})\coloneqq(s^2-1)\exp\left(-\frac{1}{2}\|(s, \bnoise_{-1})\|^2\right)$ with $|s-u_1|\leq \left|\frac{\Delta x_1}{\sigma}\right|$, and we prove $(c)$ by showing that when $|\Delta x_1|<\sigma$, the term
\begin{equation}\label{eq:derivative:Taylor:(b):bound}
    \int \Big| K(\xi(\Delta x_1,u_1),\bnoise_{-1}) \Big| d\bnoise \le \int \max_{s \in [u_1 - |\frac{\Delta x_1}{\sigma}|, u_1 + |\frac{\Delta x_1}{\sigma}|]} |K(s,\bnoise_{-1})| d\bnoise
\end{equation}
is bounded: 
Given the definition of $K$, there exists a constant $C_K > 1$ such that when $s>C_K$, $|K(s,\bnoise_{-1})|$ is monotone decreasing with respect to $s$, and when $s<-C_K$, $|K(s,\bnoise_{-1})|$ is monotone increasing with respect to $s$.
Based on that, we discuss the following three cases of the range of $u_1$:
\begin{enumerate}
    \item When $u_1 > 2C_k$, then $s \ge u - \left|\frac{\Delta x_1}{\sigma}\right| > 2C_K - 1 > C_K$, so that $|K(s,\bnoise_{-1})|$ is monotone decreasing with respect to $s$. Therefore we have 
    \begin{equation*}
    \begin{aligned}
         &\int\limits_{u_1>2C_K}\max_{s \in [u_1 - |\frac{\Delta x_1}{\sigma}|, u_1 + |\frac{\Delta x_1}{\sigma}|]} \Big|K(s,\bnoise_{-1})\Big| d\bnoise 
         \le \int\limits_{u_1>2C_K}\left|K(u_1-\left|\frac{\Delta x_1}{\sigma}\right|,\bnoise_{-1})\right| d\bnoise \\
         \le& \int\limits_{u_1>2C_K}\Big|K(u_1-1,\bnoise_{-1})\Big| d\bnoise =: C_1,
    \end{aligned}
    \end{equation*}
    where $C_1$ is a constant independent of $|\Delta x_1|$.
    \item When $-2C_K \le u_1 \le 2C_K$, we have
    \begin{equation*}
    \begin{aligned}
        &\int\limits_{-2C_K\leq u_1\leq 2C_K}\max_{s \in [u_1 - |\frac{\Delta x_1}{\sigma}|, u_1 + |\frac{\Delta x_1}{\sigma}|]} \Big|K(s,\bnoise_{-1})\Big| d\bnoise \\
        \le& \int\limits_{-2C_K\leq u_1\leq 2C_K}\max_{-2C_K\leq u'_1 \leq 2C_K}\left(\max_{s \in [u'_1 - |\frac{\Delta x_1}{\sigma}|, u'_1 + |\frac{\Delta x_1}{\sigma}|]} \Big|K(s,\bnoise_{-1})\Big|\right)d\bnoise \\
        \le& \int\limits_{-2C_K\leq u_1\leq 2C_K}\max_{-(2C_K + 1)\leq s \leq (2C_K + 1)}\Big|K(s,\bnoise_{-1})\Big|d\bnoise =: C_2
    \end{aligned}
    \end{equation*}
    where $C_2$ is a constant independent of $|\Delta x_1|$.
    \item When $u_1 < -2C_K$, then $s \le u + \left|\frac{\Delta x_1}{\sigma}\right| < -2C_K + 1 < C_K$, so that $|K(s,\bnoise_{-1})|$ is monotone increasing with respect to $s$. Similar to case 1,  we have 
    \begin{equation*}
    \begin{aligned}
        &\int\limits_{u_1<-2C_K}\max_{s \in [u_1 - |\frac{\Delta x_1}{\sigma}|, u_1 + |\frac{\Delta x_1}{\sigma}|]} \Big|K(s,\bnoise_{-1})\Big| d\bnoise
        \le \int\limits_{u_1<-2C_K}\left|K(u_1+\left|\frac{\Delta x_1}{\sigma}\right|,\bnoise_{-1})\right| d\bnoise \\
        \le& \int\limits_{u_1<-2C_K}\Big|K(u_1+1,\bnoise_{-1})\Big| d\bnoise =: C_3
    \end{aligned}
    \end{equation*}
    where $C_3$ is a constant independent of $|\Delta x_1|$.
\end{enumerate}

Therefore, \cref{eq:derivative:Taylor:(b):bound} is bounded by $C_1 + C_2 + C_3$ which is independent of $|\Delta x_1|$, and the proof of \cref{eq:derivative:Taylor:(b)} is complete. 
Subsequently, the proof of $(b)$ in \cref{eq:derivative:Taylor} is established. 
Following that, the proof of $(a)$ in \cref{eq:derivative} is concluded. 
Consequently, the existence of $\frac{\partial \Tilde{W}^\sigma(\bmx)}{\partial x_1}$ is derived from \cref{eq:derivative}.

\paragraph{Continuity of $\frac{\partial \Tilde{W}^\sigma(\bmx)}{\partial x_1}$}
Given $\sigma$, let $Q(\bm x) \coloneqq W(\frac{\bm x}{\sigma})$ and $ H(\bnoise)\coloneqq -\noise_1 \exp\left(-\frac{1}{2}\left\|\bnoise\right\|^2\right)$, according to \cref{eq:derivative} we have
\begin{equation*}
    \begin{aligned}
        \frac{\partial \Tilde{W}^\sigma(\bmx)}{\partial x_1}&=\frac{1}{\sigma}\left(\frac{1}{2\pi}\right)^\frac{n}{2}  \int W(\bm{x}+\sigma \bnoise)\noise_1\exp\left(-\frac{1}{2}\|\bnoise\|^2\right) d\bnoise \\
        &=\frac{1}{\sigma}\left(\frac{1}{2\pi}\right)^\frac{n}{2}  \int Q(\bnoise+\frac{\bm x}{\sigma})H(-\bnoise)d\bnoise\\
        &=\frac{1}{\sigma}\left(\frac{1}{2\pi}\right)^\frac{n}{2}  \int Q(\bnoise)H(\frac{\bm x}{\sigma}-\bnoise)d\bnoise\\
        &=: \frac{1}{\sigma}\left(\frac{1}{2\pi}\right)^\frac{n}{2} C(\frac{\bmx}{\sigma}),
    \end{aligned}
\end{equation*}
where we can see that $C(\frac{\bmx}{\sigma})$ is a convolution function. 
As for the continuity of $C(\frac{\bmx}{\sigma})$, we have:
\begin{equation*}
\begin{aligned}
       \lim_{\bm{h}\rightarrow0}\left| C(\frac{\bm{x}}{\sigma}+\bm{h}) - C(\frac{\bm{x}}{\sigma})\right| &=
       \lim_{\bm{h}\rightarrow0}\left|\int H(\frac{\bm{x}}{\sigma}+\bm{h}-\bnoise)Q(\bnoise)d\bnoise -\int H(\frac{\bm{x}}{\sigma}-\bnoise)Q(\bnoise)d\bnoise \right|\\
       &\le \lim_{\bm{h}\rightarrow0}\int \left| H(\frac{\bm{x}}{\sigma}+\bm{h}-\bnoise) - H(\frac{\bm{x}}{\sigma}-\bnoise)\right|\left|Q(\bnoise)\right|d\bnoise\\
       &\overset{(a)}{\le} \lim_{\bm{h}\rightarrow0}M_W\int \left|H(\frac{\bm{x}}{\sigma}+\bm{h}-\bnoise) - H(\frac{\bm{x}}{\sigma}-\bnoise)\right|d\bnoise\\
       &\overset{(b)}{=}0,
\end{aligned}
\end{equation*}
where $(a)$ holds because $Q_{\sigma}(\bm x) \coloneqq W(\frac{\bm x}{\sigma})$ is bounded by $M_W \coloneqq \max_{\bm x} W(\bmx)$, and $(b)$ holds because $ H(\bnoise)$ is integrable. 
As a result, $C(\frac{\bmx}{\sigma})$ is continuous with respect to $\bmx$, and then $\frac{\partial \Tilde{W}^\sigma(\bmx)}{\partial x_1}$ is also continuous with respect to $x_1$.

\paragraph{Differentiability of $\Tilde{W}^\sigma$} 
From the preceding discussion, it is evident that $\Tilde{W}^\sigma(\bmx)$ is differentiable with respect to $x_1$. 
This technique can be extended straightforwardly to $x_2, x_3, \dots, x_n$. 
Therefore, $\Tilde{W}^\sigma(\bmx)$ is differentiable (and thus continuous) with respect to $\bmx$.

\end{proof}

\begin{proof}[Proof of \cref{proposition:Z:smooth}]

$\Tilde{Z}_\Ss^\sigma(e^{\bm\alpha}, \bm\lambda)$ is defined as the Gaussian smoothing approximation of ${Z}_\Ss(e^{\bm\alpha}, \bm\lambda)$, whose range is $[\min_{V \in \Ss}\min_{A \in \Aa}\sum_{i=1}^n v_i(A), \max_{V \in \Ss}\max_{A \in \Aa}\sum_{i=1}^n v_i(A)]$. 
Therefore, the proof is done by applying \cref{lemma:W:differentiable}.
    
\end{proof}

\section{Further Implementation Details}
\label{app:implementation}

We use the same hyperparameters for OD-VVCA and the ablation version FO-VVCA in all the settings, and we list the main hyperparameters in \cref{tab:hyperparams}.

\begin{table*}[h]
    \centering
    \caption{
        Hyperparameters of OD-VVCA and FO-VVCA.
    }
    \vspace{5pt}
    \subtable[Small-scale settings.]{
        \resizebox{\textwidth}{!}{
        \begin{tabular}{lcccccccccc}
        \toprule
        \multirow{2}{*}{Hyperparameter} & \multicolumn{5}{c}{Symmetric} & \multicolumn{5}{c}{Asymmetric} \\
        \cmidrule(l{10pt}r{10pt}){2-6} \cmidrule(l{10pt}r{10pt}){7-11}
        & 2$\times$2\ref{settingA} & 2$\times$5\ref{settingA} & 3$\times$10\ref{settingA} & 2$\times$2\ref{settingD} & 3$\times$10\ref{settingD} &  5$\times$3\ref{settingB} & 3$\times$10\ref{settingB} & 2$\times$5\ref{settingC} & 5$\times$3\ref{settingC} & 3$\times$10\ref{settingC}\\
        \midrule
        Learning Rate & 0.01 & 0.001 & 0.001 & 0.01 & 0.001 & 0.001 & 0.001 & 0.001 & 0.001 & 0.001\\
        $n_r$ & 8 & 8 & 8 & 8 & 8 & 8 & 8 & 8 & 8 & 8 \\
        $\sigma$ & 0.01 & 0.01 & 0.01 & 0.01 & 0.01 & 0.01 & 0.01 & 0.01 & 0.01 & 0.01\\
        Iteration & 2000 & 2000 & 2000 & 2000 & 2000 & 2000 & 2000 & 2000 & 2000 & 2000 \\
        Batch Size & 1024 & 2048 & 1024 & 1024 & 1024 & 1024 & 1024 & 2048 & 1024 & 1024\\
        \bottomrule
        \end{tabular}
        }
    }
    \subtable[Large-scale settings.]{
        \resizebox{\textwidth}{!}{  
        \begin{tabular}{lcccccccccc}
        \toprule
            \multirow{2}{*}{Hyperparameter} & \multicolumn{4}{c}{Symmetric} & \multicolumn{6}{c}{Asymmetric} \\
            \cmidrule(l{10pt}r{10pt}){2-5} \cmidrule(l{10pt}r{10pt}){6-11}
            & 5$\times$10\ref{settingA} & 10$\times$5\ref{settingA} & 5$\times$10\ref{settingD} & 10$\times$5\ref{settingD} & 5$\times$10\ref{settingB} & 10$\times$5\ref{settingB} & 30$\times$5\ref{settingB} & 5$\times$10\ref{settingC} & 10$\times$5\ref{settingC} & 30$\times$5\ref{settingC} \\
        \midrule
        Learning Rate & 0.0003 & 0.0003 & 0.0003 & 0.0003 & 0.005 & 0.005 & 0.005 & 0.005 & 0.005 & 0.005\\
        $n_r$ & 8 & 8 & 8 &8 & 8 & 8 & 8 & 8 & 8 & 8 \\
        $\sigma$ &0.001 & 0.01 & 0.01 & 0.01 & 0.01 & 0.01 & 0.01 & 0.01 & 0.01 & 0.01\\
        Iteration & 2000 & 2000 & 2000 & 2000 & 2000 & 2000 & 2000 & 2000 & 2000 & 2000 \\
        Batch Size & 1024 & 1024 & 1024 & 1024 & 1024 & 1024& 1024 & 1024 & 1024 & 1024 \\
        \bottomrule
        \end{tabular}
        }
    }
    \label{tab:hyperparams}
\end{table*}

For Lottery AMA~\citep{curry2022differentiable} and AMenuNet~\citep{duan2023scalable}, we adopt the same hyperparameters as those specified in their original publications, with the exception of the candidate size. 
Due to computational resource limitations, the size of the candidate allocation set cannot be excessively large. 
Therefore, we set the candidate size based on the complexity of the auction scenario to make a balance between performance and computational feasibility, as detailed in \cref{tab:menusize}.

\begin{table*}[h]
    \centering
    \caption{
        Candidiate size of Lottery AMA and AMenuNet in different settings.
    }
    \vspace{5pt}
    \subtable[Small-scale settings.]{
        \resizebox{\textwidth}{!}{
        \begin{tabular}{lcccccccccc}
        \toprule
        \multirow{2}{*}{Method} & \multicolumn{5}{c}{Symmetric} & \multicolumn{5}{c}{Asymmetric} \\
        \cmidrule(l{10pt}r{10pt}){2-6} \cmidrule(l{10pt}r{10pt}){7-11}
        & 2$\times$2\ref{settingA} & 2$\times$5\ref{settingA} & 3$\times$10\ref{settingA} & 2$\times$2\ref{settingD} & 3$\times$10\ref{settingD} &  5$\times$3\ref{settingB} & 3$\times$10\ref{settingB} & 2$\times$5\ref{settingC} & 5$\times$3\ref{settingC} & 3$\times$10\ref{settingC} \\
        \midrule
        Lottery AMA & 32 & 128 & 1024 & 32 & 1024 & 128 & 1024 & 128 & 128 & 1024\\
        AMenuNet & 32 & 128 & 1024 & 32 & 1024 & 128 & 1024 & 128 & 128 & 1024\\
        \bottomrule
        \end{tabular}
        }
    }
    \subtable[Large-scale settings.]{
        \resizebox{\textwidth}{!}{  
        \begin{tabular}{lcccccccccc}
        \toprule
            \multirow{2}{*}{Method} & \multicolumn{4}{c}{Symmetric} & \multicolumn{6}{c}{Asymmetric} \\
            \cmidrule(l{10pt}r{10pt}){2-5} \cmidrule(l{10pt}r{10pt}){6-11}
            & 5$\times$10\ref{settingA} & 10$\times$5\ref{settingA} & 5$\times$10\ref{settingD} & 10$\times$5\ref{settingD} & 5$\times$10\ref{settingB} & 10$\times$5\ref{settingB} & 30$\times$5\ref{settingB} & 5$\times$10\ref{settingC} & 10$\times$5\ref{settingC} & 30$\times$5\ref{settingC} \\
        \midrule
        Lottery AMA & 4096 & 4096& 4096& 4096& 4096& 4096& 4096& 4096& 4096& 4096\\
        AMenuNet & 4096& 4096& 4096& 4096& 4096& 4096& 4096& 4096& 4096& 4096\\
        \bottomrule
        \end{tabular}
        }
    }
    \label{tab:menusize}
\end{table*}

\section{Training Time} \label{app:time}

We document the total training time of Lottery AMA~\citep{curry2022differentiable}, AMenuNet~\citep{duan2023scalable}, BBBVVCA~\citep{VVCA2015}, and the proposed OD-VVCA for different auction scales in \cref{tab:time}.
We tuned the number of training iterations for each method based on our empirical observations of their convergence rates, which are also listed in the table.

\begin{table*}[h]
    \centering
    \caption{
        The total training time for different methods.
        We use the same parallelizable dynamic programming for winning allocation determination in both BBBVVCA and OD-VVCA.
    }
    \vspace{10pt}
    \resizebox{\textwidth}{!}{  
    \begin{tabular}{lccccccc}
    \toprule
       Training Time & $2\times 2$ & $2\times 5$ & $5\times 3$ & $10\times 5$ & $30\times 5$ & $3\times 10$ & $5\times 10$  \\
    \midrule
     Lottery AMA (10000 iterations) &     1min40s       &       15min      &    8min        &     20min       & 6h  &         1h40min    &       9h     \\
     AMenuNet (1000 iterations) & 12min      & 22min       & 24min      & 1h20min   & {50min}*   & 1h20min     & 3h30min     \\
     BBBVVCA  (4000 iterations) & 1min      & 3min         & 3min        & 16min     & 1h    & 2h       & 4h20min      \\
     OD-VVCA (2000 iterations) & 1min20s      & 7min        & 6min       & 40min    & 3h    & 6h          & 9h     \\
    \bottomrule
    \end{tabular}
    }
    \par $^*$We reduce the batch size of AMenuNet in the $30 \times 5$ scenarios due to GPU constraints. 
    \label{tab:time}
\end{table*}

As indicated in the table, OD-VVCA demonstrates acceptable training times across all settings. 
This result, along with the similarly acceptable training times of BBBVVCA, underscores the efficiency of the parallelized dynamic programming winner determination algorithm. 
Furthermore, the relatively small number of iterations highlights the effective convergence of our objective decomposition method.

\end{document}